\newtheorem{theorem}{Theorem}
\newtheorem{prop}{Proposition}  
\title{Versatile Auxiliary Classifier with Generative Adversarial Network (VAC+GAN)}
\author{
Shabab Bazrafkan\\
Dept. Electrical \& Electronic Engineering, College of Engineering \& Informatics\\
National University of Ireland Galway\\
  %David S.~Hippocampus%\thanks{Use footnote for providing further
    %information about author (webpage, alternative
    %address)---\emph{not} for acknowledging funding agencies.} \\
  %Department of Computer Science\\
  %Cranberry-Lemon University\\
  %Pittsburgh, PA 15213 \\
  \texttt{s.bazrafkan1@nuigalway.ie} \\
  %% examples of more authors
   \And
Hossein Javidnia\\
Dept. Electrical \& Electronic Engineering, College of Engineering \& Informatics\\
National University of Ireland Galway\\
  %% Address \\
   \texttt{h.javidnia1@nuigalway.ie} \\
   \AND
Peter Corcoran\\
Dept. Electrical \& Electronic Engineering, College of Engineering \& Informatics\\
National University of Ireland Galway\\
  %% Address \\
   \texttt{peter.corcoran@nuigalway.ie} \\
  %% \And
  %% Coauthor \\
  %% Affiliation \\
  %% Address \\
  %% \texttt{email} \\
  %% \And
  %% Coauthor \\
  %% Affiliation \\
  %% Address \\
  %% \texttt{email} \\
}
\begin{document}
% \nipsfinalcopy is no longer used

\maketitle

\begin{abstract}
One of the most interesting challenges in Artificial Intelligence is to train conditional generators which are able to provide labeled adversarial samples drawn from a specific distribution. In this work, a new framework is presented to train a deep conditional generator by placing a classifier in parallel with the discriminator and back propagate the classification error through the generator network. The method is versatile and is applicable to any variations of Generative Adversarial Network (GAN) implementation, and also gives superior results compared to similar methods.
\end{abstract}

\section{Introduction}
Deep Learning influences almost every aspect of the machine learning and artificial intelligence. It gives superior results for classification, and regression problems compare to classical machine learning approaches \cite{CEmag1}. The other impact of Deep Learning is on generative models \cite{GAN}. In this work, the problem of conditional generators is considered, and a global solution is presented. The conditional generative models are models which can generate a class-specific sample given the right latent input. As one example, these generators can learn the data distribution for male/female faces and produce outputs that match a single (male/female) class. Several researchers have attempted to provide a solution to this problem  \cite{CGAN,ACGAN}. But none of them are able to propose a global solution.\\
In \cite{CGAN} the authors introduce a variation of GAN known as conditional GAN, wherein the model is similar to the ordinary GAN, but the latent space is conditional with respect to the class label. This approach is versatile enough to be extended to other GAN variations, but there is no mathematical proof that the trained generator is able to provide distinct samples for different classes.\\
In our experiments, applying this method to the BEGAN \cite{BEGAN} scheme to generate male/female images did not generate gender-specific samples. This is explained in more detail in section \ref{sec:3}.\\
The most successful implementation of the class specified generative model is Auxiliary Classifier GANs (ACGAN) \cite{ACGAN} wherein by adding a classification term to the generator and discriminator loss, the generator is forced to generate a specific class of data for a given input. see figure \ref{fig:ACGANfig}.\\
\begin{figure}
\centering
\begin{subfigure}[b]{0.49\textwidth}
% Use the relevant command to insert your figure file.
% For example, with the graphicx package use
  \includegraphics[width=\textwidth]{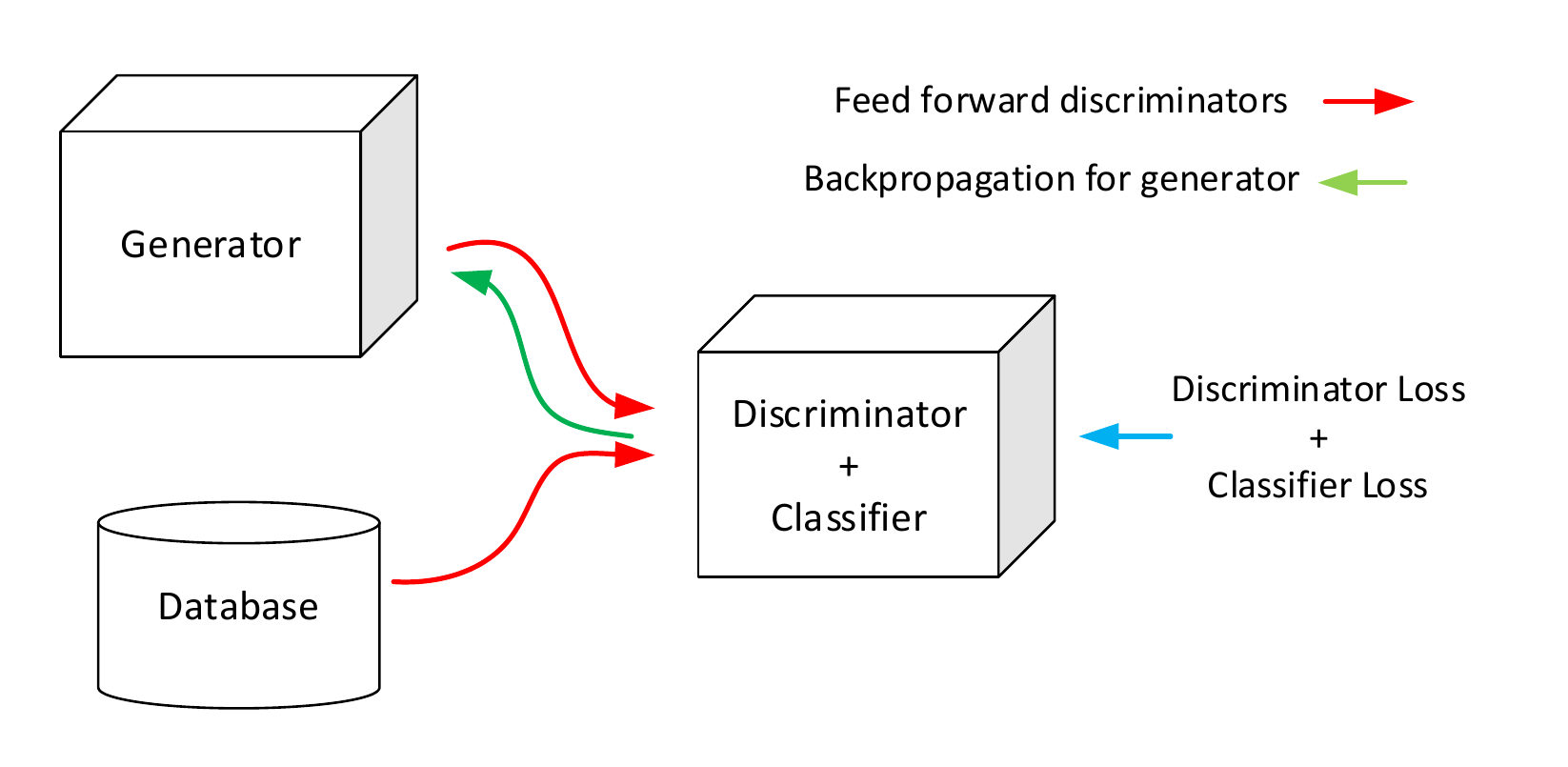}
% figure caption is below the figure
\caption{The ACGAN scheme.}
\label{fig:ACGANfig}
\end{subfigure}
       % Give a unique label
\begin{subfigure}[b]{0.49\textwidth}
\includegraphics[width=\textwidth]{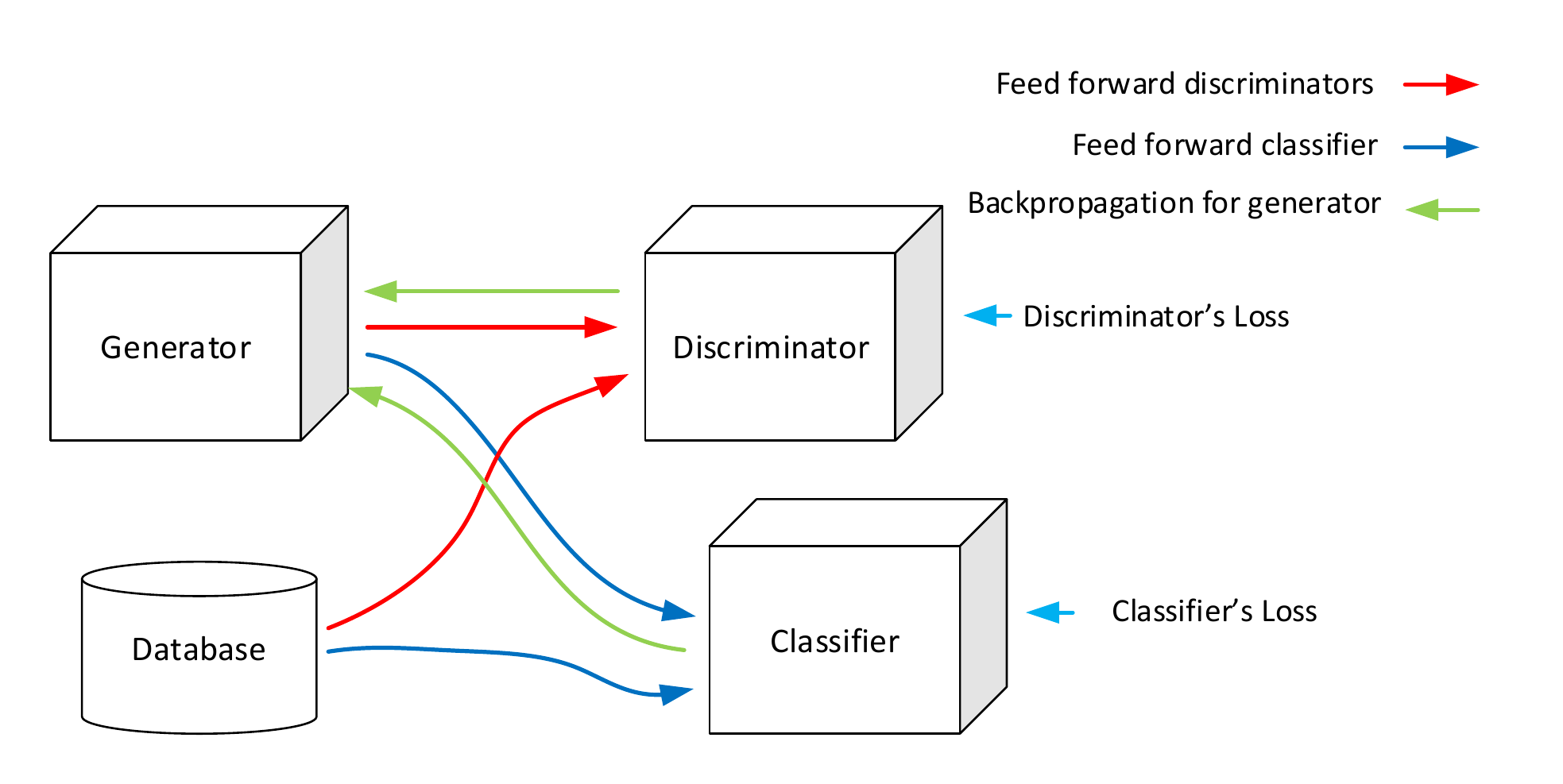}
% figure caption is below the figure
\caption{The presented scheme (VAC+GAN)}
\label{fig:1}       % Give a unique label
\end{subfigure}
\caption{ACGAN vs presented model.}
\end{figure}
The main problem with ACGAN is that it is not versatile enough to be applied to other GAN variations. Mixing the loss of discriminator and the classifier will alter the training convergence specially if the output of the discriminator is from a different type compare to the classifier's output. For example in the BEGAN implementation, the output of the discriminator is an image (2D matrix) compare to the output of the classifier which is a (1D) vector. Merging the loss for two different output types into a single loss alters the convergence of the network.\\
In this work, a new approach for training a class specified generator model is presented which is independent of the generator and discriminator structure. i.e., the presented method can be applied to any model that is already converging. The method is called Versatile Auxiliary Classifier with Generative Adversarial Network (VAC+GAN). The mathematical proof for the effectiveness of the method is also presented. The idea is to get the classification term (in ACGAN) out of the discriminator's loss function by adding a classifier network that back-propagates through the generator.\\
In the next section, the proposed idea is presented alongside with the mathematical proof of the effectiveness of the method, the experimental results are given in section three and conclusions are presented in the last section.

\section{Versatile Auxiliary Classifier with Generative Adversarial Network (VAC+GAN)}
\label{sec:2}
The concept proposed in this research is to place a classifier network in parallel with the Discriminator. The classifier accepts the samples from the generator, and the classification error is back-propagated through the classifier and the generator. The model structure is shown in figure \ref{fig:1}.

In this section, the proposed method is investigated for two class problems. In this case, the classifier is a binary classifier with binary cross-entropy loss function.
The notations used in the mathematical proof are as follows:
\begin{enumerate}
\item $V(G,D)$: is the objective function for a general generative model, wherein $G$ and $D$ are Generator and Discriminator.
\item The latent space $Z$ is partitioned into ${Z_1},{Z_2}$ subsets. This means that $Z_1$ and $Z_2$ are disjoint and their union is equal to the $Z$-space. 
\item $C$ is the classifier function.
\item $\mathcal{L}_{ce}$ is the binary cross-entropy loss function.
\end{enumerate}
\begin{prop}
For a fixed Generator and Discriminator, the optimal Classifier is
\begin{equation}
C_{G,D}^{*}=\frac{p_{X_1}({\bf x})}{p_{X_1}({\bf x})+p_{X_2}({\bf x})}
\end{equation}
wherein $C_{G,D}^{*}$ is the optimal classifier, and $p_{X_1}({\bf x})$, and $p_{X_2}({\bf x})$ are the distributiuons of generated samples for the first and second class respectively.
\end{prop}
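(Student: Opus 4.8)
The plan is to mirror the classic argument of Goodfellow et al.\ for the optimal discriminator, adapted to a binary cross-entropy objective. First I would assign label $1$ to samples generated from the latent subset $Z_1$ and label $0$ to those from $Z_2$, so that the generator induces the two class-conditional densities $p_{X_1}$ and $p_{X_2}$ on the output space. With these conventions, the expected binary cross-entropy that the classifier seeks to minimise takes the form
\begin{equation}
\mathcal{L}_{ce}(C) = -\int_{\mathbf{x}} \Big[ p_{X_1}(\mathbf{x})\log C(\mathbf{x}) + p_{X_2}(\mathbf{x})\log\big(1-C(\mathbf{x})\big)\Big]\, d\mathbf{x},
\end{equation}
where the first term collects the contribution of correctly labelled class-one samples and the second that of class-two samples.

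The key observation is that, in the nonparametric regime, $C$ may be chosen freely at each point $\mathbf{x}$, so minimising the integral reduces to minimising the integrand pointwise. Fixing $\mathbf{x}$ and writing $a = p_{X_1}(\mathbf{x})$, $b = p_{X_2}(\mathbf{x})$, I would maximise the scalar map $y \mapsto a\log y + b\log(1-y)$ over $y \in (0,1)$. Differentiating and setting the derivative to zero gives $a/y - b/(1-y) = 0$, whence $y = a/(a+b)$; since the map is strictly concave for $a,b>0$, this is the unique maximiser. Substituting back yields exactly $C^{*}_{G,D}(\mathbf{x}) = p_{X_1}(\mathbf{x}) / \big(p_{X_1}(\mathbf{x})+p_{X_2}(\mathbf{x})\big)$, as claimed.

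I expect the main obstacle to be less the calculus---which is routine---than the careful justification of the reduction to pointwise optimisation and the bookkeeping of the partition. Concretely, one must argue that the optimiser lies inside $(0,1)$, ruling out the boundary where the logarithms diverge, and that the support condition $p_{X_1}(\mathbf{x})+p_{X_2}(\mathbf{x})>0$ holds wherever the integrand contributes, so that the ratio is well defined. I would also make explicit that $G$ and $D$ are held fixed, so that $p_{X_1}$ and $p_{X_2}$ are frozen densities during this inner optimisation, and that the parallel placement of $C$ decouples its loss from $V(G,D)$---which is precisely what legitimises treating the classifier's objective in isolation.
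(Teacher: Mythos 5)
Your proposal is correct and follows essentially the same route as the paper: both express the cross-entropy loss as an integral of $p_{X_1}({\bf x})\log C({\bf x}) + p_{X_2}({\bf x})\log\big(1-C({\bf x})\big)$ over the sample space and then maximize pointwise, the only cosmetic difference being that the paper first writes the loss over the latent space and changes variables via $G(z_1)=x_1$, $G(z_2)=x_2$, while you state the $x$-space form directly and carry out the calculus (stationarity plus strict concavity) that the paper merely cites as a known fact about $f \mapsto m\log(f)+n\log(1-f)$. Your additional remarks on the support condition and on the decoupling of the classifier loss from $V(G,D)$ are sound points of rigor but do not change the argument.
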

\begin{proof}
The objective function for the model is given by:
\begin{equation}
O(G,D,C) = V(G,D)+\mathcal{L}_{ce}(C)
\label{p1e1}
\end{equation}
This can be rewritten as
\begin{equation}
O(G,D,C) = V(G,D)
-\mathbb{E}_{{\bf z}\sim p_{Z_1}({\bf z})}\big[\log(C(G({\bf z})))\big]
-\mathbb{E}_{{\bf z}\sim p_{Z_2}({\bf z})}\big[\log(1-C(G({\bf z})))\big]
\label{p1e2}
\end{equation}
which is given by
\begin{equation}
O(G,D,C) =V(G,D)-\Bigg\{\int p_{Z_1}({\bf z})\log\big(C(G({\bf z}))\big)+p_{Z_2}({\bf z})\log\big(1-C(G({\bf z}))\big)d{\bf z}\Bigg\}
\label{p1e3}
\end{equation}
Considering $G(z_1) = x_1$ and $G(z_2)=x_2$ we get
\begin{equation}
O(G,D,C) = V(G,D)-\Bigg\{\int p_{X_1}({\bf x})\log\big(C({\bf x})\big)
+p_{X_2}({\bf x})\log\big(1-C({\bf x})\big)d{\bf x}\Bigg\}
\label{p1e4}
\end{equation}
The function $f\rightarrow m\log(f)+n\log(1-f)$ reaches its maximum at $\frac{m}{m+n}$ for any $(m,n)\in \mathbb{R}^2 \setminus \{0,0\}$, concluding the proof.
\end{proof}
\begin{theorem}
The maximum value for $\mathcal{L}_{ce}(C)$ is $\log(4)$ and is achieved if and only if $p_{X_1}=p_{X_2}$.
\end{theorem}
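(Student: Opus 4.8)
The plan is to mirror the classical GAN optimality argument: evaluate the cross-entropy term at the optimal classifier supplied by the preceding proposition and express the result through a Jensen--Shannon divergence. First I would substitute $C_{G,D}^{*}=p_{X_1}/(p_{X_1}+p_{X_2})$ into
\begin{equation*}
\mathcal{L}_{ce}(C)=-\int\Big[p_{X_1}({\bf x})\log\big(C({\bf x})\big)+p_{X_2}({\bf x})\log\big(1-C({\bf x})\big)\Big]d{\bf x},
\end{equation*}
which yields
\begin{equation*}
\mathcal{L}_{ce}(C_{G,D}^{*})=-\int\Big[p_{X_1}\log\tfrac{p_{X_1}}{p_{X_1}+p_{X_2}}+p_{X_2}\log\tfrac{p_{X_2}}{p_{X_1}+p_{X_2}}\Big]d{\bf x}.
\end{equation*}

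Second, I would manufacture the constant $\log 4$ by setting $M=\tfrac{1}{2}(p_{X_1}+p_{X_2})$ and rewriting each denominator $p_{X_1}+p_{X_2}$ as $2M$, so that $\log\frac{p_{X_i}}{p_{X_1}+p_{X_2}}=\log\frac{p_{X_i}}{M}-\log 2$. Since $p_{X_1}$ and $p_{X_2}$ are densities, each split-off $\log 2$ integrates against its weight to $\log 2$, and with the overall minus sign the two leftover integrals become exactly the Kullback--Leibler divergences $D_{KL}(p_{X_1}\,\|\,M)$ and $D_{KL}(p_{X_2}\,\|\,M)$:
\begin{equation*}
\mathcal{L}_{ce}(C_{G,D}^{*})=\log 4-\big[D_{KL}(p_{X_1}\,\|\,M)+D_{KL}(p_{X_2}\,\|\,M)\big]=\log 4-2\,D_{JS}(p_{X_1}\,\|\,p_{X_2}),
\end{equation*}
where the bracket is identified with twice the Jensen--Shannon divergence.

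Finally, I would invoke non-negativity of the Jensen--Shannon divergence (equivalently of each KL term, by Gibbs' inequality) to conclude $\mathcal{L}_{ce}(C_{G,D}^{*})\le\log 4$, with the bound attained precisely when $D_{JS}(p_{X_1}\,\|\,p_{X_2})=0$. As the Jensen--Shannon divergence vanishes if and only if its two arguments agree almost everywhere, the maximal value $\log 4$ is reached exactly when $p_{X_1}=p_{X_2}$, establishing the claim.

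The algebraic rearrangement and the divergence bookkeeping are routine. The points I expect to demand the most care are two. First, the sign convention: because $\mathcal{L}_{ce}$ carries an overall minus sign relative to the ordinary GAN value function, the optimal classifier minimizes $\mathcal{L}_{ce}$ over $C$ for fixed densities, whereas $\log 4$ is the largest value the loss attains when viewed as a functional of the generator-induced class densities $p_{X_1},p_{X_2}$---so one must be explicit about which variable the ``maximum'' is taken over. Second, the equality case: turning $D_{JS}=0$ into $p_{X_1}=p_{X_2}$ relies on the strict-equality clause of Gibbs' inequality (strict convexity of $t\mapsto t\log t$), which forces the densities themselves to coincide rather than merely their integrals, under the standing assumption that the relevant integrals are finite.
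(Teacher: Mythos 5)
Your proposal is correct and follows essentially the same route as the paper: substitute the optimal classifier $C_{G,D}^{*}$ into $\mathcal{L}_{ce}$, split off the factor of $2$ in the denominator to produce the constant $\log 4$ minus the two Kullback--Leibler divergences $KL\big(p_{X_i}\,\|\,\tfrac{p_{X_1}+p_{X_2}}{2}\big)$, and conclude by non-negativity of these divergences. Your treatment is in fact slightly more complete than the paper's, since you explicitly justify the ``only if'' direction via the equality case of Gibbs' inequality (the paper only verifies sufficiency and the upper bound), and you fold in the Jensen--Shannon identification that the paper defers to its next theorem.
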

\begin{proof}
For $p_{X_1}=p_{X_2} \Longrightarrow C_{G,D}^*=\frac{1}{2}$ and by observing that
\begin{equation}
-\mathcal{L}_{ce}(C) = \mathbb{E}_{{\bf x}\sim p_{X_1}({\bf x})}(\log(C({\bf x})))+\mathbb{E}_{{\bf x}\sim p_{X_2}({\bf x})}(\log(1-C({\bf x})))
\label{p2e1}
\end{equation}
results in 
\begin{equation}
\mathcal{L}_{ce}(C_{G,D}^*)=-\log(\frac{1}{2})-\log(\frac{1}{2})=\log(4)
\label{p2e2}
\end{equation}
To show that this is the maximum value, from equation \ref{p1e4} we have
\begin{equation}
\mathcal{L}_{ce}(C_{G,D}^*)=-\int p_{X_1}({\bf x})\log \Bigg(\frac{p_{X_1}({\bf x})}{p_{X_1}({\bf x})+p_{X_2}({\bf x})} \Bigg) d{\bf x}
-\int p_{X_2}({\bf x})\log \Bigg(\frac{p_{X_2}({\bf x})}{p_{X_1}({\bf x})+p_{X_2}({\bf x})} \Bigg) d{\bf x}
\label{p2e3}
\end{equation}
which is equal to
\begin{equation}
\mathcal{L}_{ce}(C_{G,D}^*)= \log(4)
-\int p_{X_1}({\bf x})\log \Bigg(\frac{p_{X_1}({\bf x})}{\frac{p_{X_1}({\bf x})+p_{X_2}({\bf x})}{2}} \Bigg) d{\bf x}
-\int p_{X_2}({\bf x})\log \Bigg(\frac{p_{X_2}({\bf x})}{\frac{p_{X_1}({\bf x})+p_{X_2}({\bf x})}{2}} \Bigg) d{\bf x}
\label{p2e4}
\end{equation}
results in
\begin{equation}
\begin{split}
\mathcal{L}_{ce}(C_{G,D}^*)= \log(4) -& KL\Bigg(p_{X_1}({\bf x})\Big|\Big| \frac{p_{X_1}({\bf x})+p_{X_2}({\bf x})}{2}\Bigg)\\
-& KL\Bigg(p_{X_2}({\bf x})\Big|\Big| \frac{p_{X_1}({\bf x})+p_{X_2}({\bf x})}{2}\Bigg)
\end{split}
\label{p2e5}
\end{equation}
Where $KL$ is the Kullback-Leibler divergence, which is always positive or equal to zero, concluding the proof. 
\end{proof}
\begin{theorem}
Minimizing the binary cross-entropy loss function $\mathcal{L}_{ce}$ for the classifier $C$ is increasing the Jensen-Shannon divergence between $p_{X_1}$ and $p_{X_2}$.
\end{theorem}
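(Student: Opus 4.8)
The plan is to read the desired monotonicity directly off the closed form already derived in the proof of the previous theorem, once the Jensen--Shannon divergence is recognised inside it. First I would recall that the Jensen--Shannon divergence between the two class densities is
\begin{equation*}
JS(p_{X_1},p_{X_2})=\frac{1}{2}KL\Bigg(p_{X_1}\Big|\Big|\frac{p_{X_1}+p_{X_2}}{2}\Bigg)+\frac{1}{2}KL\Bigg(p_{X_2}\Big|\Big|\frac{p_{X_1}+p_{X_2}}{2}\Bigg),
\end{equation*}
i.e.\ exactly one half of the sum of the two Kullback--Leibler terms that appear in equation (\ref{p2e5}).

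Starting from that equation, I would substitute the definition above to collapse the two KL terms into a single divergence, obtaining
\begin{equation*}
\mathcal{L}_{ce}(C_{G,D}^*)=\log(4)-2\,JS(p_{X_1},p_{X_2}),
\end{equation*}
or equivalently $JS(p_{X_1},p_{X_2})=\frac{1}{2}\big(\log(4)-\mathcal{L}_{ce}(C_{G,D}^*)\big)$. Since $\log(4)$ is a constant, this is a strictly decreasing affine relationship between the minimised classifier loss and the divergence, so any decrease in $\mathcal{L}_{ce}(C_{G,D}^*)$ produces an increase in $JS(p_{X_1},p_{X_2})$, which is the claim.

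The step I expect to be the main obstacle is interpretational rather than computational: the identity is valid only at the optimal classifier $C_{G,D}^*$, so I must state carefully what is being minimised with respect to what. The clean reading is that once the classifier has reached its optimum $C_{G,D}^*$ (Proposition 1), the only remaining dependence of $\mathcal{L}_{ce}$ on the free parameters is through $p_{X_1}$ and $p_{X_2}$, hence through $JS$; consequently driving the classifier loss down --- which in the VAC+GAN loop is effected by updating the generator --- is equivalent to pushing the two class distributions apart. I would therefore make this dependence explicit before invoking the affine relation. A minor point to flag for rigour is the usual absolute-continuity caveat on the KL terms, but this is inherited unchanged from the preceding theorem and needs no new argument.
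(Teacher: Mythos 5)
Your proposal is correct and follows essentially the same route as the paper's own proof: both recognize the two Kullback--Leibler terms in equation (\ref{p2e5}) as twice the Jensen--Shannon divergence, substitute to obtain $\mathcal{L}_{ce}(C_{G,D}^*)=\log(4)-2\,JSD(p_{X_1}||p_{X_2})$, and read the claim off this affine relation. Your additional remarks on the identity holding only at the optimal classifier and on absolute continuity are careful refinements the paper leaves implicit, but they do not change the argument.
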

\begin{proof}
the Jensen-Shannon divergence between $p_1$ and $p_2$ is given by
\begin{equation}
JSD(p_1||p_2)=\frac{1}{2}KL\bigg(p_1\big|\big|\frac{p_1+p_2}{2}\bigg)+\frac{1}{2}KL\bigg(p_2\big|\big|\frac{p_1+p_2}{2}\bigg)
\label{p3e1}
\end{equation}
considering equation \ref{p2e5} and \ref{p3e1}, it gives
\begin{equation}
\mathcal{L}_{ce}(C_{G,D}^*)=\log(4)-2JSD(p_{X_1}||p_{X_2})
\end{equation}
minimizing $\mathcal{L}_{ce}$ is equal to maximizing $JSD(p_{X_1}||p_{X_2})$, concluding the proof.
\end{proof}
Here it has been shown that placing the classifier $C$ and add its loss value the generative framework pushes the generator to increase the distance of samples that are drawn from a specific class with respect to the other class. For example, in the male/female face scenario, one can use a partition of $Z$ space to generate male and and another partition to generate female samples.
\section{Experimental Results}
\label{sec:3}
In this section, an experiment is conducted to show the effectiveness of the proposed scheme while different measures are used to show the diversity of the generated samples including Mean Square Error (MSE), Root Mean Square Error (RMSE), Mean Absolute Error (MAE), Universal Quality Index (UQI), and Structural Similarity Index (SSIM). These measurements are explained in Appendix \ref{appendixA}.
MSE, RMSE and MAE show the difference between two images. The higher values for these metrics correspond to higher variation of the generated images. UQI and SSIM measure the structural similarity between two samples. Lower value for these measurements correspond to less similarity. In evaluating generative models higher values for MSE, RMSE, and MAE and lower values for UQI and SSIM is desirable.\\
In this section, all the networks are trained in Lasagne \cite{LASAGNE} on top of Theano \cite{THEANO} library in Python.\\
The experiment is conducted by training a gender specified generator using the BEGAN \cite{BEGAN} structure trained on CelebA database. The results of the proposed method are compared against the results of conditional GAN idea applied to the BEGAN framework. The comparisons with ACGAN method is not available since applying this method to BEGAN framework altered the convergence of the model and the generator constrained to a deterministic output even after the first epoch.\\
The CelebA dataset \cite{CelebA} consisting of 202,599 original images is used for training our GAN framework. The OpenCV frontal face cascade classifier \cite{OpenCV} is used to detect facial regions which are cropped and resized to $48\times 48$ pixels. In the BEGAN framework the generator network is a typical GAN generator which has the same architecture as the decoder part of an auto-encoder. The network used in our experiment contains one fully connected layer which maps the input to a 3D layer. Next layers are all convolutional layers followed by $(2, 2)$ un-pooling layers for every second convolution. The exponential linear unit (ELU) \cite{ELU} is used as activation function except in the last layer wherein no non-linearity has been applied. And the discriminator network is an auto-encoder. The
input of the auto-encoder is the image $(48 \times 48)$. The encoder part of the network is made of convolutional layers with ELU activation function. The downscaling in these layers is obtained by using $(2, 2)$ stride in every second convolutional layer. The architecture of decoder is the same as the generator network. And the bottleneck of the auto-encoder is a fully connected layer with no activation function. The encoder and decoder networks used for training the BEGAN are shown in figures \ref{fig:2} and \ref{fig:3} respectively. The layers shown in red apply no nonlinearity to the data.
\begin{figure}
\begin{subfigure}[b]{.49\textwidth}
% Use the relevant command to insert your figure file.
% For example, with the graphicx package use
  \includegraphics[width=\columnwidth]{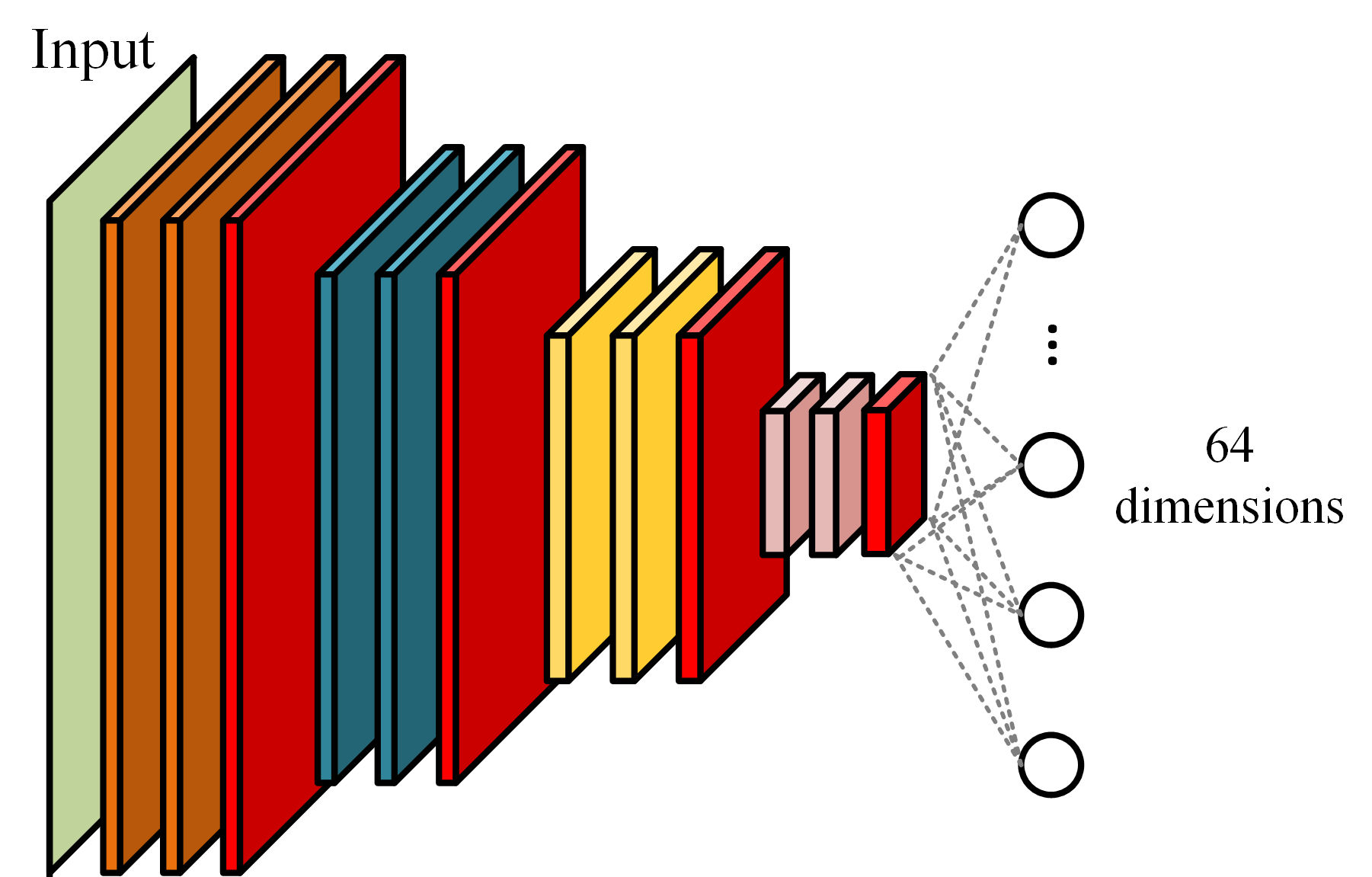}
% figure caption is below the figure
\caption{Encoder.}
\label{fig:2}   
\end{subfigure} % Give a unique label
\begin{subfigure}[b]{.49\textwidth}
% Use the relevant command to insert your figure file.
% For example, with the graphicx package use
  \includegraphics[width=\columnwidth]{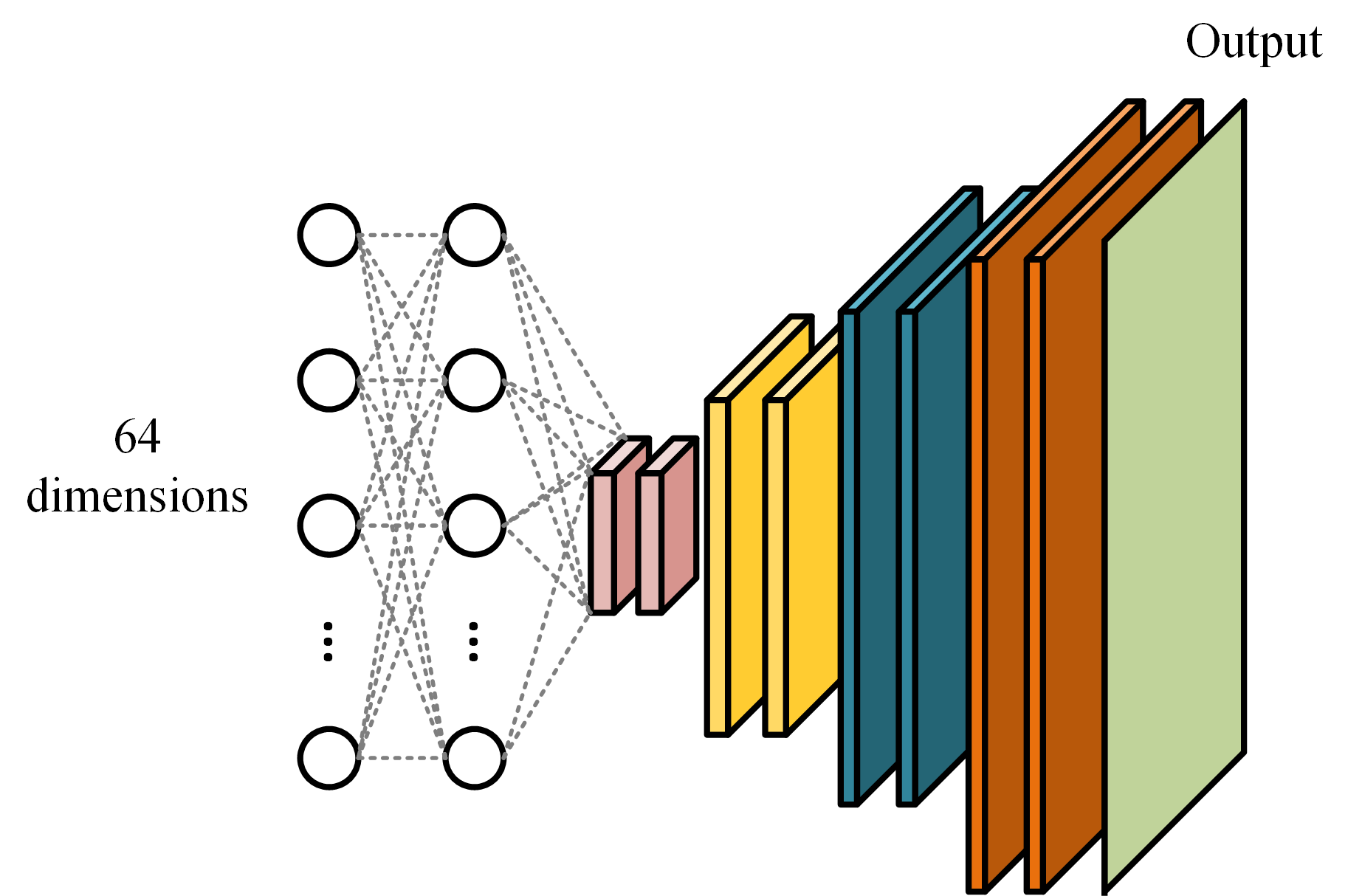}
% figure caption is below the figure
\caption{Decoder.}
\label{fig:3}  

\end{subfigure}
\caption{Encoder and Decoder architectures used in BEGAN approach.}
\end{figure}

The loss function for training the conditional BEGAN (CBEGAN) is given by:
\begin{equation}
\begin{split}
&L_d = L(x)-k_t\cdot L(G(z|c))\\
&L_g = L(G(z|c))\\
&k_{t+1}=k_t+\lambda_{k}\big(\gamma L(x)-L(G(z|c))\big)
\end{split}
\end{equation}
Where $L_g$ and $L_d$ are generators and discriminators losses respectively. $G$ is the generator function, $z$ is a sample from the latent space, $c$ is the class label, $x$ is the sample drawn from the database, $\lambda_k$ is the learning rate for $k$, $\gamma$ is the equilibrium hyper parameter set to 0.5 in this work, and $L$ is the auto-encoders loss defined by
\begin{equation}
L(v) = |v-D(v)|^2
\end{equation} 
The proposed method needs a classifier to back-propagate the classification error throughout the generator. The classifier used in this experiment is a simple deep classifier given in table \ref{tab:1}.
%
% For tables use
\begin{table}
% table caption is above the table
\caption{the classifier structure for the CelebA+BEGAN experiment.}
\label{tab:1}
\centering       % Give a unique label
% For LaTeX tables use
\begin{tabular}{|l|l|l|l|}
\hline
Layer & Type & kernel & Activation  \\
\hline
Input & Input$(48\times 48)$ & -- & -- \\
\hline
Hidden 1 & Conv & $3\times3$(16 ch)& ReLU \\
\hline
Pool 1 &Max pooling &$2\times2$ &--\\
\hline
Hidden 2 & Conv & $3\times3$(8 ch)& ReLU \\
\hline
Pool 2 &Max pooling &$2\times2$ &--\\
\hline
Hidden 3 & Dense & 1024& ReLU \\
\hline
Output & Dense & 1 & Sigmoid \\
\hline
\end{tabular}
\end{table}
The loss functions used to train the VAC+GAN applied to BEGAN is given by
\begin{equation}
\begin{split}
&L_d = L(x)-k_t\cdot L(G(z|c))\\
&L_g = \vartheta \cdot L(G(z|c)) + \zeta\cdot BCE\\
&k_{t+1}=k_t+\lambda_{k}\big(\gamma L(x)-L(G(z|c))\big)
\end{split}
\end{equation}
where $BCE$ is the binary cross-entropy loss of the classifier, and $\vartheta$ and $\zeta$ are set to 0.997 and 0.003 respectively. The optimizer used for training the generator and discriminator is ADAM with learning rate, $\beta_1$ and $\beta_2$ equal to 0.0001, 0.5 and 0.999 respectively. And the classifier is optimized using nestrov momentum gradient descent with learning rate and momentum equal to 0.01 and 0.9 respectively.\\
The latent space has 64 dimensions and the first dimension is used to partition the latent space in two subspaces corresponding to two classes.
The results for the CBEGAN and proposed method are shown in figures \ref{fig:4n5} and \ref{fig:6n7}.\\
\begin{figure}
\begin{subfigure}[b]{.49\textwidth}
% Use the relevant command to insert your figure file.
% For example, with the graphicx package use
  \includegraphics[width=\columnwidth]{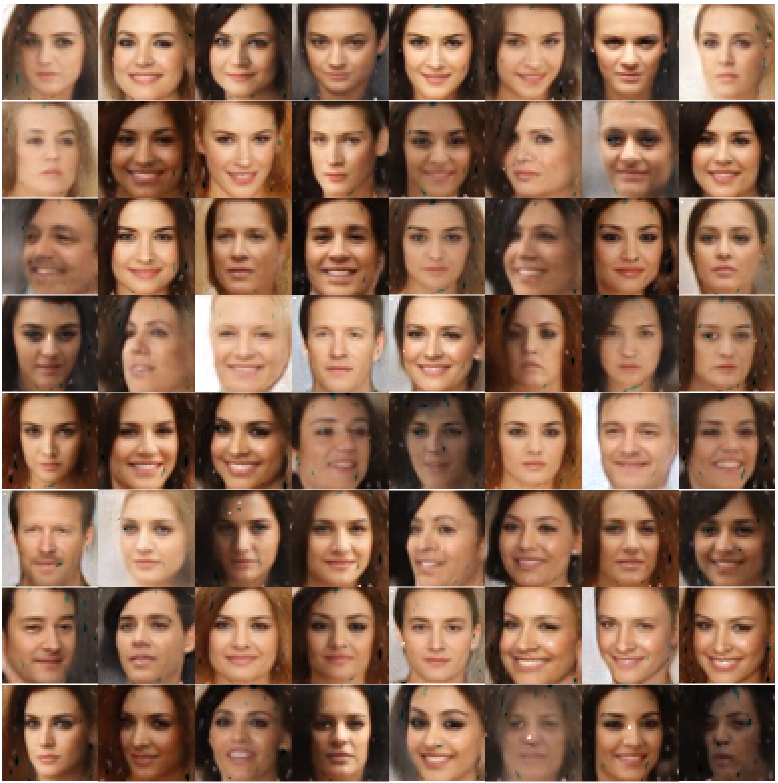}
% figure caption is below the figure
\caption{Constrained to generate female samples.}
\label{fig:4}       % Give a unique label
\end{subfigure}
\begin{subfigure}[b]{.49\textwidth}
  \includegraphics[width=\columnwidth]{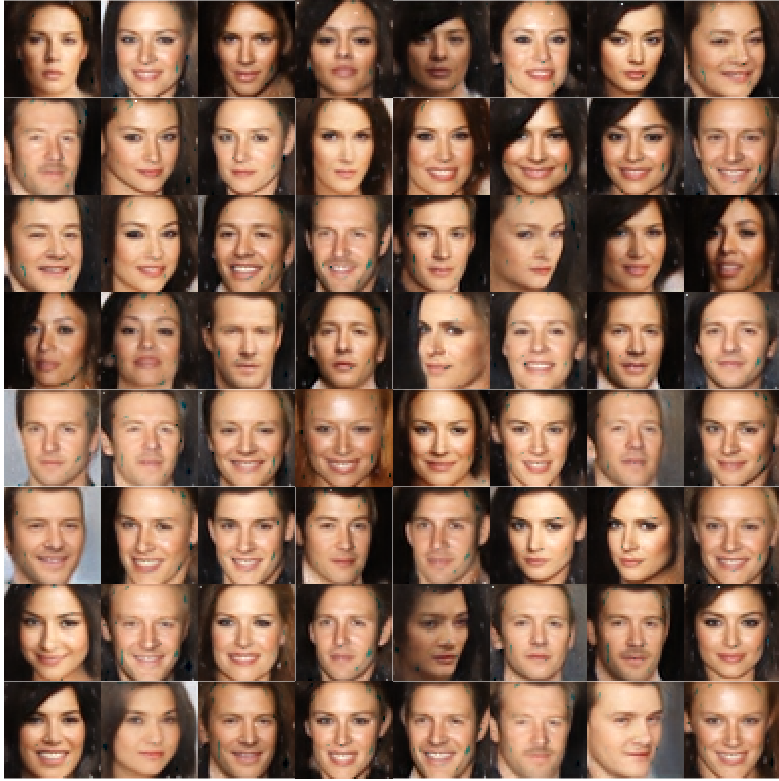}
% figure caption is below the figure
\caption{Constrained to generate male samples.}
\label{fig:5} 
\end{subfigure}
\caption{Generator trained using CBEGAN method.}
\label{fig:4n5}
\end{figure}
\begin{figure}
% Use the relevant command to insert your figure file.
% For example, with the graphicx package use
\begin{subfigure}[b]{.49\textwidth}
  \includegraphics[width=\columnwidth]{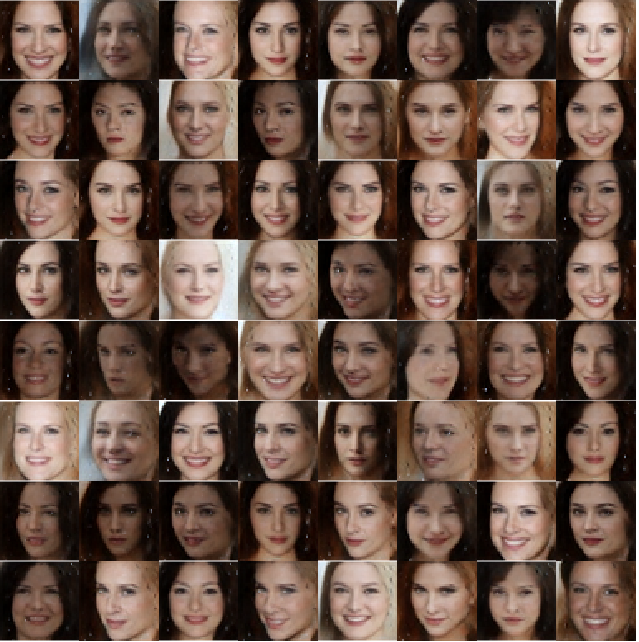}
% figure caption is below the figure
\caption{Constrained to generate female samples.}
\label{fig:6}       % Give a unique label
\end{subfigure}
\begin{subfigure}[b]{.49\textwidth}
% For example, with the graphicx package use
  \includegraphics[width=\columnwidth]{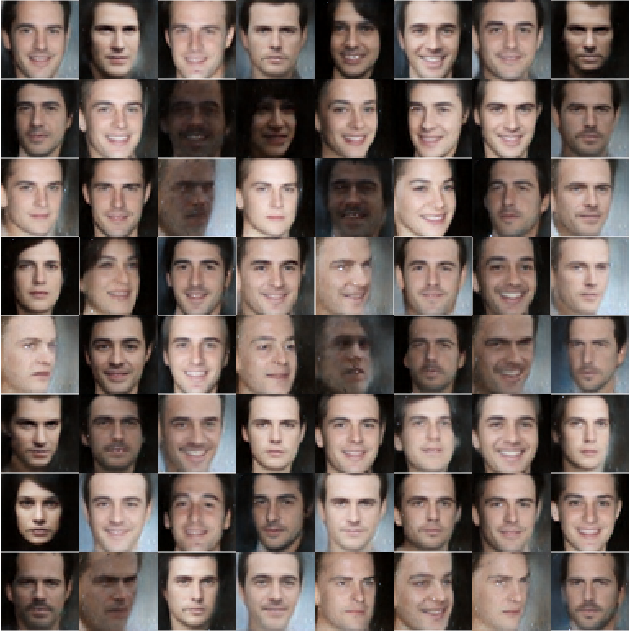}
% figure caption is below the figure
\caption{Constrained to generate male samples.}
\label{fig:7}
\end{subfigure}
\caption{Generator trained using the proposed method (VAC+GAN).}
\label{fig:6n7}
\end{figure}
As it is shown in these figures, the gender-specific generator fails to correctly generate samples for a specific class when the conditional GAN is applied. But the proposed method is able to correctly constrain the generator to make samples drawn from a specific class.
In order to compare the models, 80 random  male and 80 random female samples have been generated using the trained generators. Three observations have been conducted on these samples:
\begin{enumerate}
\item Each male sample has been compared to all the other male samples, and all the metrics have been calculated for these comparisons, and the average of these numbers has been obtained (blue bars).
\item Each female sample has been compared to all the other female samples, and all the metrics have been calculated for these comparisons, and the average of these numbers has been obtained (purple bars).
\item Each male samples has been compared to all female samples, and all the metrics have been calculated for these comparisons, and the average of these numbers has been obtained (yellow bars).
\end{enumerate}
The aforementioned measurements are illustrated in figures \ref{fig:8} and \ref{fig:9} for the CBEGAN and proposed method.
\begin{figure}
% Use the relevant command to insert your figure file.
% For example, with the graphicx package use
  \includegraphics[width=\columnwidth]{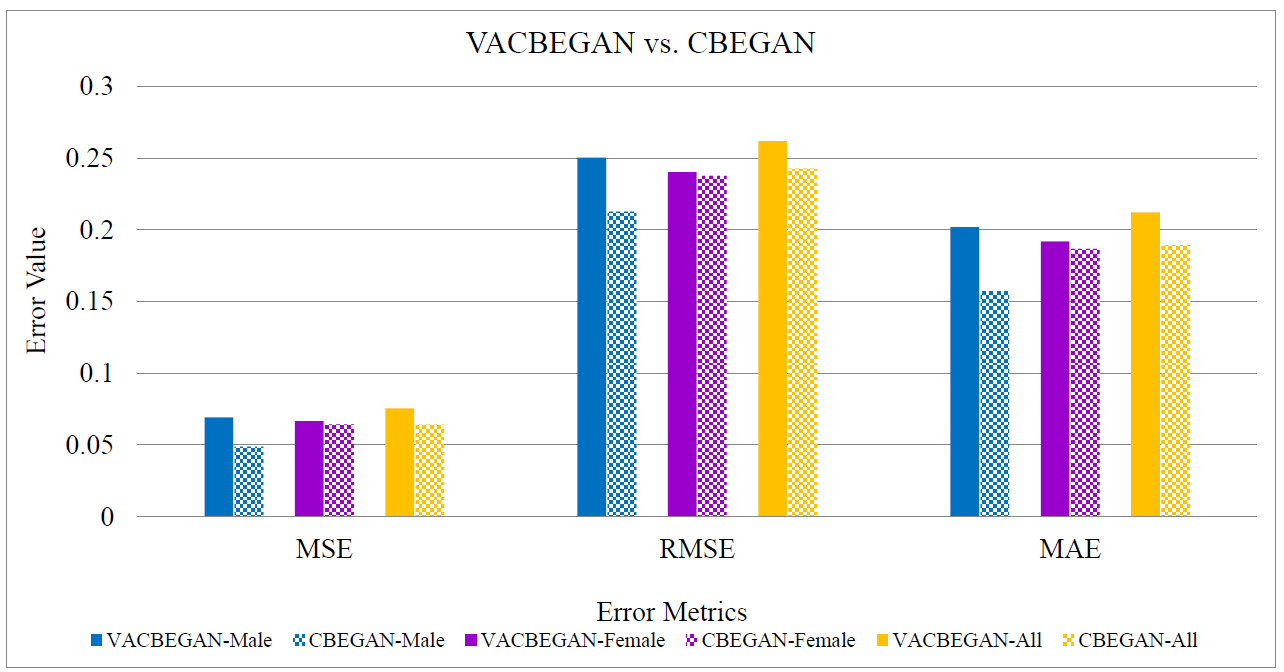}
% figure caption is below the figure
\caption{UQI  and SSIM metrics for presented method vs. CBEGAN. Lower values show higher performance.}
\label{fig:8}       % Give a unique label
\end{figure}
\begin{figure}
% Use the relevant command to insert your figure file.
% For example, with the graphicx package use
  \includegraphics[width=\columnwidth]{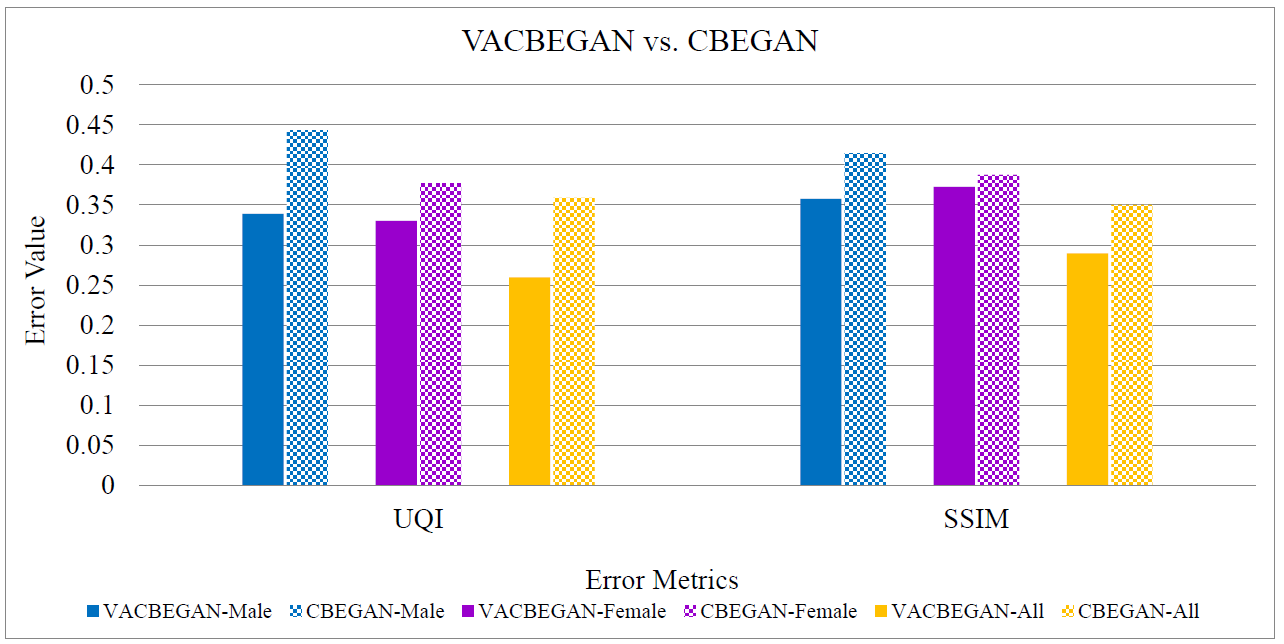}
% figure caption is below the figure
\caption{MSE, RMSE and MAE metrics for presented method vs. CBEGAN. higher values show better performance.}
\label{fig:9}       % Give a unique label
\end{figure}
The lower value of UQI and SSIM shows the less similarity between samples.  In figure \ref{fig:8}, from the first two observations (blue and purple bars) it is shown that the proposed method is able to generate samples in each class that are not similar. From the third observation (yellow bars) it is shown that the inter-class similarity in the proposed method is less than CBEGAN i.e., This shows that the generated samples from different classes are less similar to each other.
The higher value of MSE, RMSE, and MAE shows the higher variation of the generated images. As it is shown in figure \ref{fig:9} the proposed method is able to generate a higher variation of samples for each class and also between classes.
\section{Discussion and Conclusion}
\label{sec:4}
In this work a new approach has been introduced to train conditional deep generators. In this work it has been proven that VAC+GAN is applicable to any GAN framework regardless of the model structure and/or loss function (see Section \ref{sec:2}). The idea is to place a classifier in parallel to the discriminator network and back-propagate the loss of this classifier through the generator network in the training stage. It has also been shown that the presented framework increases the Jensen Shannon Divergence (JSD) between classes generated by the deep generator. i.e., the generator can produce samples drawn from a desired class. The results has been compared to another versatile method known as Conditional GAN (CGAN) for gender specified face generation.\\ 
The future work includes applying the method to datasets with large number of classes and also extend the implementation for bigger size images. Other idea is to extend the current approach to regression problems. This can help to generate samples with specific continuous aspect.
\section*{Acknowledgments}
This research is funded under the SFI Strategic Partnership Program by Science Foundation Ireland (SFI) and FotoNation Ltd. Project ID: 13/SPP/I2868 on Next Generation Imaging for Smartphone and Embedded Platforms.
%\section*{References}
\bibliographystyle{plain}      % mathematics and physical sciences
\bibliography{lib} 
\begin{appendices}
\section{Diversity Measurements}
\label{appendixA}
\begin{enumerate}
\item {\bf MSE (Mean Squared Error)}:MSE measures the average of the squares of the errors or deviations; representing the difference between the estimator and what is estimated. The lower value of MSE shows lesser error. 
\begin{equation}
MSE(f,g) = \frac{1}{mn}\sum_{0}^{m-1}\sum_{0}^{n-1}||f(i,j)-g(i,j)||
\end{equation}
\item {\bf RMSE (Root Mean Squared Error)}: RMSE is a quadratic scoring rule that measures the average magnitude of the error. It is the square root of the average of squared differences between prediction and actual observation. The lower value of RMSE shows lesser error.
\begin{equation}
RMSE(y,\hat{y}) = \sqrt{\frac{1}{n}\sum_{i=1}^{n}(y_i-\hat{y_i})^2}
\end{equation}
\item {\bf MAE (Mean Absolute Error)}:MAE also measures the average magnitude of the errors in a set of predictions, without considering their direction. It is the average over the test sample of the absolute differences between prediction and actual observation where all individual differences have equal weight. The lower value of MAE shows lesser error.
\begin{equation}
MAE(f,y) = \frac{1}{n}\sum_{i=1}^{n}|f_i-y_i|
\end{equation}
\item {\bf UQI (Universal Quality Index)} \cite{UQI}: UQI measures the structural distortion of the images by modeling the distortion as a combination of three factors: loss of correlation, luminance distortion, and contrast distortion. The higher value of UQI shows lesser error.
\item {\bf SSIM (Structural Similarity Index)} \cite{SSIM}: SSIM is a perception-based model that considers image degradation as perceived change in structural information, while also incorporating important perceptual phenomena, including both luminance masking and contrast masking terms. The higher value of SSIM shows lesser error.
\end{enumerate}
\end{appendices}

\end{document}